\newtheorem{theorem}{Theorem}
\newtheorem{corollary}{Corollary}
\newtheorem{lemma}{Lemma}
\newtheorem{definition}{Definition}
\theoremstyle{remark}
\newtheorem{example}{Example}
\newtheorem{remark}{Remark}
\begin{document}

\title{Function-Correcting Codes for Locally \\ Bounded Functions}

\author{\IEEEauthorblockN{ Charul Rajput\IEEEauthorrefmark{1}, B. Sundar Rajan\IEEEauthorrefmark{2}, Ragnar Freij-Hollanti\IEEEauthorrefmark{1}, Camilla Hollanti\IEEEauthorrefmark{1}}

\IEEEauthorblockA{\IEEEauthorrefmark{1}{\normalsize Department of Mathematics and System Analysis, Aalto University, Finland
    \\\{charul.rajput, ragnar.freij, camilla.hollanti\}@aalto.fi}}

    \IEEEauthorblockA{\IEEEauthorrefmark{2}{\normalsize Department of Electrical Communication Engineering, Indian Institute of Science, Bengaluru, India
    \\\ bsrajan@iisc.ac.in}}}

\maketitle

\begin{abstract}
In this paper, we introduce a class of functions that assume only a limited number $\lambda$ of values within a given Hamming $\rho$-ball and call them locally $(\rho, \lambda)$-bounded functions. We develop function-correcting codes (FCCs) for a subclass of these functions and propose an upper bound on the redundancy of FCCs. The bound is based on the minimum length of an error-correcting code with a given number of codewords and a minimum distance. Furthermore, we provide a sufficient optimality condition for FCCs when $\lambda = 4$. We also demonstrate that any function can be represented as a locally $(\rho, \lambda)$-bounded function, illustrating this with a representation of Hamming weight distribution functions. Furthermore, we present another construction of function-correcting codes for Hamming weight distribution functions. 
\end{abstract}

\begin{IEEEkeywords}
Function-correcting codes, error-correcting codes, locally bounded functions, redundancy bound. 
\end{IEEEkeywords}

\section{Introduction}
Function-correcting codes (FCCs) are a class of codes introduced by Lenz et al. in \cite{LBWY2023}. These codes are designed to protect the evaluation of a specific function of message vectors during transmission over noisy channels. Unlike traditional error-correcting codes (ECCs), which aim to protect the entire message vector against errors, FCCs focus on preserving particular attributes or functions of the message. This targeted approach is particularly useful in numerous practical scenarios \cite{MW1967,BK1981,AC1981,OR2001,BNZ2009,KW2017,SSGAGD2019}, where the focus is primarily on a specific attribute of the message. It is also more efficient than protecting the whole message in case the message is large and the output of the function is small. Several codes that offer error protection for the output of a machine learning algorithm were given in \cite{MSVD2016,KSVD2017,HSJ2020}. Further, the application of FCCs in archival data storage is detailed in \cite{LBWY2023}. For a function $f$ and a positive integer $t$, a systematic encoding is called an $(f, t)$-FCC if it can protect the function value against at most $t$ errors.

The study of function-correcting codes started with the work in \cite{LBWY2023}, where the authors developed a general theory for FCCs. This research considers systematic codes that focus on reducing redundancy, and the channel considered here is a binary symmetric channel. In this work, FCCs are created for various specific families of functions, such as locally binary functions, Hamming weight functions, and Hamming weight distribution functions. They present some optimal constructions of FCCs for these functions.

Later, the work by Xia et al. in \cite{XLC2016} extends the concept of FCCs to symbol-pair read channels, calling them function-correcting symbol-pair codes (FCSPCs). They also focus on some particular functions and provide constructions for FCSPCs. A recent study by Premlal and Rajan in \cite{PR2024} provides a lower bound on the redundancy of FCCs. Since FCCs are equivalent to error-correcting codes (ECCs) when the function is bijective, this bound is also applicable to systematic ECCs. They show the tightness of this bound for a certain range of parameters. They then focus on function-correcting codes for linear functions, proving that the upper bound proposed by Lenz et al. is tight by providing a construction for these codes for a class of linear functions. 

Recent work by Ge et al. in \cite{GXZZ2025} mainly focuses on two types of functions: Hamming weight functions and Hamming weight distribution functions. They provide some improved bounds on the redundancy of FCCs for these functions and some optimal constructions that achieve the lower bound.
The most recent work by Singh et al. in \cite{SSY2025} extends the work of \cite{XLC2016} for $b$-symbol read channels over finite fields and introduces the idea of irregular $b$-symbol distance codes.

In this work, we generalize the concept of locally binary functions given in \cite{LBWY2023} to a class that we call locally $(\rho, \lambda)$-bounded functions (Definition \ref{rl_func}). For $\lambda=2$, these are the same as $\rho$-locally binary functions. We provide an upper bound on the redundancy of an $(f,t)$-FCC for a subclass of locally $(2t, 4)$-bounded functions, where $t$ is a positive integer, by giving an FCC construction for these functions. Furthermore, we provide an optimality condition for which the upper bound is tight, and the given construction is optimal. 

We also extend this upper bound to the subclass of locally $(2t,\lambda)$-bounded functions whose function balls are contiguous blocks, where the bound depends on the existence of an ECC with certain parameters.


Lastly, we observe that for a fixed positive integer $\rho$, any function between finite sets admitting a Hamming metric can be considered as a locally $(\rho, \lambda)$-bounded function with a suitably chosen $\lambda$, and illustrate this by representing Hamming weight functions and Hamming weight distribution functions as locally $(\rho, \lambda)$-bounded functions. We also provide a simple construction of an FCC for Hamming weight distribution functions using a known error-correcting code.

\textit{Conventions and notation:} A finite field of size $q$ is denoted by $\mathbb{F}_q$. We use the conventions that $\mathbb{N}=\{0,1,2,\dots\}$ and $[n]=\{1, 2, \ldots, n\}$. For any vector $u$, $(u)^t$ represents the $t$-fold repetition of $u$, for example $(011)^2=011011$.  An $(n,M,d)$ code represents a (not necessarily linear) error-correcting code with length $n$, number of codewords $M$, and minimum distance $d$. Finally, we use $N(M,d)$ to denote the smallest length $n$ for which an $(n,M,d)$ code exists. For more details, see \cite{LX2004}.

\section{Preliminaries}
\label{preliminaries}
In this section, we provide some basic concepts and definitions related to function-correcting codes from \cite{LBWY2023}.
\begin{definition}[Function-correcting code (FCC)]
 Consider a function $f: \mathbb{F}_2^k \rightarrow S$, where $S$ depends on the function considered. A systematic encoding $\mathcal{C}: \mathbb{F}_2^k \rightarrow \mathbb{F}_2^{k+r}$ is defined as an \emph{$(f, t)$-FCC} if, for any $u_1, u_2 \in \mathbb{F}_2^k$ such that $f(u_1) \neq f(u_2)$, it holds that: 
$$d(\mathcal{C}(u_1), \mathcal{C}(u_2)) \geq 2t+1,$$
where $d(x, y)$ denotes the Hamming distance between vectors $x$ and $y$.
\end{definition}

Thus, an $(f,t)$-FCC allows the receiver to determine the value $f(u)$ after $t$ bit errors have occurred on the codeword $\mathcal{C}(u)$. If $f: \mathbb{F}_2^k \rightarrow \mathrm{Im}(f)$ is a bijection, then an $(f, t)$-FCC is equivalent to a systematic $(k+r, 2^k, 2t+1)$ error-correcting code. 

\begin{definition}[Optimal redundancy]
The \emph{optimal redundancy} $r_f(k, t)$ is defined as the minimum of $r$ for which there exists an $(f, t)$-FCC with an encoding function $\mathcal{C}: \mathbb{F}_2^k \rightarrow \mathbb{F}_2^{k+r}$.
\end{definition}

\begin{definition}[Distance requirement matrix]
Let $u_1,\ldots, u_{M} \in \mathbb{F}_2^k$. The \emph{distance requirement matrix (DRM)} $\mathcal{D}_f(t, u_1, u_2,\ldots, u_M)$ for an $(f, t)$-FCC is an $M \times M$ matrix with entries
$$
  [\mathcal{D}_f(t, u_1, \ldots, u_M)]_{i, j} = \begin{cases}  \max(2t+1-d(u_i, u_j), 0), & \text{if} \ f(u_i) \neq f(u_j), \\
0 & \text{otherwise},
\end{cases}  
$$
where $i, j \in \{1,2, \ldots, M\}$.
\end{definition}

\begin{example}\label{ex1}
 Consider $\mathbb{F}_2^2 = \{00, 01, 10, 11\}$ and a function $f: \mathbb{F}_2^2 \rightarrow \{0, 1\}$ such that $f(00)= 0, f(01)= f(10)= f(11)=1.$
Then for $t=1$, the distance requirement matrix is

$$\mathcal{D}_f(t, u_1, u_2, u_3, u_4) = \left[\begin{matrix} 0 & 2 & 2 & 1 \\
2 & 0& 0& 0 \\
2 &0 & 0 & 0 \\
1 & 0& 0& 0
\end{matrix}\right].$$

\end{example}

\begin{definition}[Irregular-distance code or $\mathcal{D}$-code]
Let $\mathcal{D} \in \mathbb{N}^{M\times M}$. Then $\mathcal{P}=\{p_1, p_2, \ldots, p_M\}$, where $p_i\in \mathbb{F}_2^r$ for $i \in [M]$, is said to be an \emph{irregular-distance code} or \emph{$\mathcal{D}$-code} if there is an ordering of $\mathcal{P}$ such that $d(p_i, p_j) \geq [\mathcal{D}]_{i,j}$ for all $i, j \in \{1, 2, \ldots, M\}$. Further, $N(\mathcal{D})$ is defined as the smallest integer $r$ such that there exists a $\mathcal{D}$-code of length $r$. If $[\mathcal{D}]_{i, j} = D$ for all $i, j \in \{1,2,\ldots, M\}, i\neq j$, then $N(\mathcal{D})$ is denoted as $N(M, D)$.
\end{definition}

For $\mathcal{D}=\mathcal{D}_f(t, u_1, u_2,\ldots, u_{2^k})$, if we have a $\mathcal{D}$-code $\mathcal{P}=\{p_1, p_2, \ldots, p_{2^k}\}$, then we can use it to construct an $(f, t)$-FCC with the encoding $\mathcal{C}(u_i) = (u_i, p_i)$ for all $i \in \{1, 2, \ldots, 2^k\}$.

\begin{example}
 Consider the same function  $f: \mathbb{F}_2^2 \rightarrow \{0, 1\}$ from Example \ref{ex1}. Then we have   a $\mathcal{D}$-code $\mathcal{P}=\{00, 11, 11, 01\}$ for which the distance structure is 
$$
 \begin{blockarray}{ccccc}
   &  00 & 11 & 11 & 01 \\
  \begin{block}{c[cccc]}
00 & 0 & 2 & 2 & 1 \\
11 & 2 & 0& 0& 1 \\
11 & 2 &0 & 0 & 1 \\
01 & 1 & 1& 1& 0 \\
    \end{block}
\end{blockarray}, $$
 where $\mathcal{D}=\mathcal{D}_f(t, u_1, u_2, u_3, u_4)$, given in Example \ref{ex1}. Since $r=2$ is the smallest length possible for a $\mathcal{D}$-code, we have $N(\mathcal{D}_f(t, u_1, u_2,\ldots, u_M)) = 2$.
Further, the $(f, 1)$-FCC obtained using $\mathcal{P}$ is $\{0000, 0111, 1011, 1101\}.$
\end{example}

\begin{definition}
For a function $f: \mathbb{F}_2^k \rightarrow S$, the \emph{(function) distance} between $f_1, f_2 \in \mathrm{Im}(f)$ is defined as
$$d(f_1, f_2) = \min_{u_1, u_2 \in \mathbb{F}_2^k} \{d(u_1, u_2) |  f(u_1)=f_1, f(u_2) = f_2\}.$$
\end{definition}

\begin{definition}[Function distance matrix]
Consider a function $f: \mathbb{F}_2^k \rightarrow S$ and $E=|\mathrm{Im}(f)|$. Then the $E \times E$ matrix $\mathcal{D}_f(t, f_1, f_2,\ldots, f_E)$ with entries given as
$$
[\mathcal{D}_f(t, f_1, f_2,\ldots, f_E)]_{i, j} = \\ \begin{cases}  \max(2t+1 -d(f_i, f_j), 0), & \text{if} \ i \neq j, \\
0 & \text{otherwise},
\end{cases}
$$
is called a \emph{function distance matrix (FDM)}.
\end{definition}

\begin{example}
\normalfont For the function  $f: \mathbb{F}_2^2 \rightarrow \{0, 1\}$ given in Example \ref{ex1}, we have   
$$\mathcal{D}_f(t=1, f_1=0, f_2=1) = \begin{bmatrix} 0 & 2 \\
2 & 0 
\end{bmatrix}.$$
\end{example}

\begin{theorem}[\kern-0.2em\cite{LBWY2023}]\label{thm1}
For any function $f: \mathbb{F}_2^k \rightarrow S$ and $\{u_1, u_2, \ldots, u_m\}\subseteq \mathbb{F}_2^k$,
$$r_f(k, t) \geq N(\mathcal{D}_f(t, u_1, u_2, \ldots, u_m)),$$
and for $|\mathrm{Im}(f)|\geq 2$, $r_f (k, t) \geq 2t$.
\end{theorem}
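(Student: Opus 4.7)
The plan is to extract a $\mathcal{D}_f$-code from any optimal $(f,t)$-FCC by looking at its parity part. Let $\mathcal{C}$ be an $(f,t)$-FCC attaining the optimal redundancy $r = r_f(k,t)$. Since $\mathcal{C}$ is systematic, I can write $\mathcal{C}(u) = (u, p(u))$ with $p(u) \in \mathbb{F}_2^r$, and the Hamming distance splits additively as $d(\mathcal{C}(u_i), \mathcal{C}(u_j)) = d(u_i, u_j) + d(p(u_i), p(u_j))$.

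For the first inequality, I would fix the given subset $\{u_1, \ldots, u_m\}$ and set $p_i := p(u_i)$. For any pair $(i,j)$ with $f(u_i) \neq f(u_j)$, the FCC property $d(\mathcal{C}(u_i), \mathcal{C}(u_j)) \geq 2t+1$ combined with the additive decomposition immediately gives $d(p_i, p_j) \geq 2t+1 - d(u_i, u_j)$. Together with the trivial lower bound $d(p_i, p_j) \geq 0$, this yields $d(p_i, p_j) \geq \max(2t+1 - d(u_i, u_j), 0) = [\mathcal{D}_f(t, u_1, \ldots, u_m)]_{i,j}$. For pairs with $f(u_i) = f(u_j)$, the corresponding matrix entry is $0$ and the inequality holds trivially. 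Hence $\{p_1, \ldots, p_m\}$ is a $\mathcal{D}_f(t, u_1, \ldots, u_m)$-code of length $r$, forcing $r \geq N(\mathcal{D}_f(t, u_1, \ldots, u_m))$.

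For the second inequality, my plan is to reduce it to the first by exhibiting a single well-chosen pair. Since $|\mathrm{Im}(f)| \geq 2$, the function $f$ is non-constant on the Hamming-connected cube $\mathbb{F}_2^k$, so along any path (flipping one coordinate at a time) between two inputs with different $f$-values, there must exist a consecutive pair $u, v \in \mathbb{F}_2^k$ with $d(u,v) = 1$ and $f(u) \neq f(v)$. Applying the first inequality to $\{u, v\}$ produces a $2\times 2$ distance requirement matrix with off-diagonal entries equal to $2t$; a $\mathcal{D}$-code realising this needs two vectors in $\mathbb{F}_2^r$ at Hamming distance at least $2t$, which forces $r \geq 2t$.

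There is no real obstacle in this argument; the only care required is the clean case split between pairs with equal and unequal $f$-values so that the parity vectors genuinely satisfy \emph{all} the entries of $\mathcal{D}_f$, and the short graph-theoretic remark that non-constancy on $\mathbb{F}_2^k$ guarantees an edge across a preimage boundary.
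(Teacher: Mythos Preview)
Your argument is correct. Note, however, that the paper does not supply its own proof of this theorem: it is quoted verbatim from \cite{LBWY2023} in the Preliminaries section and used only as a black box (e.g.\ in the proof of Theorem~\ref{opti}). Your proposal is exactly the standard argument behind this result---restricting an optimal systematic encoding to the parity part over the chosen subset to extract a $\mathcal{D}_f$-code, and then, for the $2t$ lower bound, locating an adjacent pair $u,v$ with $d(u,v)=1$ and $f(u)\neq f(v)$ via connectivity of the hypercube. There is nothing to compare against in this paper, but nothing is missing from your write-up either.
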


\begin{theorem}[\kern-0.2em\cite{LBWY2023}]\label{thm2}
For any function $f: \mathbb{F}_2^k \rightarrow S$,
$$r_f(k, t) \leq N(\mathcal{D}_f(t, f_1, f_2, \ldots, f_E)),$$
where $E=|\mathrm{Im}(f)|$ and $\mathcal{D}_f(t, f_1, f_2, \ldots, f_E)$ is an FDM.
\end{theorem}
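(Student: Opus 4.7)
The plan is to realize the bound by an explicit construction. First I would fix $r := N(\mathcal{D}_f(t, f_1, \ldots, f_E))$ and invoke the definition of $N$ to obtain a $\mathcal{D}_f$-code $\mathcal{P} = \{p_1, \ldots, p_E\} \subseteq \mathbb{F}_2^r$ satisfying $d(p_i, p_j) \geq \max(2t+1 - d(f_i, f_j), 0)$ for all $i \neq j$. For each message $u \in \mathbb{F}_2^k$, let $i(u) \in [E]$ denote the unique index with $f(u) = f_{i(u)}$, and define the systematic encoding $\mathcal{C}(u) := (u, p_{i(u)}) \in \mathbb{F}_2^{k+r}$. By construction the redundancy is exactly $r$, so it remains to check that $\mathcal{C}$ meets the $(f,t)$-FCC distance requirement.

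For the verification, I would take any $u_1, u_2 \in \mathbb{F}_2^k$ with $f(u_1) \neq f(u_2)$ and write $f(u_1) = f_i$, $f(u_2) = f_j$ with $i \neq j$. The key observation, which is really the only non-bookkeeping step, is that the minimality in the definition of function distance yields $d(f_i, f_j) \leq d(u_1, u_2)$. Combining this with the separation guaranteed by $\mathcal{P}$, I would compute
\begin{equation*}
d(\mathcal{C}(u_1), \mathcal{C}(u_2)) = d(u_1, u_2) + d(p_i, p_j) \geq d(u_1, u_2) + \bigl(2t+1 - d(f_i, f_j)\bigr) \geq 2t+1,
\end{equation*}
where the first inequality uses $d(p_i,p_j)\geq 2t+1 - d(f_i,f_j)$ (valid regardless of the sign, since $d(p_i,p_j)\geq 0$ handles the case $d(f_i,f_j)\geq 2t+1$). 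This establishes $r_f(k,t) \leq r$.

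Since every ingredient is a direct consequence of the definitions of the FDM, the function distance, and a $\mathcal{D}$-code, I do not anticipate any real technical obstacle. The only conceptual point to articulate clearly is that attaching the parity block $p_{i(u)}$ to each message \emph{by function value} (rather than individually per message, as in the DRM-based construction used for Theorem~\ref{thm1}) still suffices, because the coarser separation $d(f_i, f_j)$ between function values is always dominated by the separation $d(u_1, u_2)$ between any pair of preimages.
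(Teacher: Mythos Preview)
Your argument is correct and is exactly the standard construction one uses to prove this bound. Note, however, that the present paper does not supply its own proof of this theorem: it is quoted from \cite{LBWY2023} as a known result, so there is nothing in the paper to compare your proposal against. That said, your construction---assigning a parity block $p_{i(u)}$ determined solely by the function value $f(u)$ and then using $d(f_i,f_j)\le d(u_1,u_2)$ to close the distance gap---is precisely the argument behind the original result, and the paper implicitly relies on this same idea in the constructions of Section~\ref{main_results} (e.g., the proofs of Lemma~\ref{lem_bound4} and Theorem~\ref{lem_bound_lambda}, where the parity $u_p$ depends only on $\mathrm{Col}_f(u)$).
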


\begin{corollary}[\kern-0.2em\cite{LBWY2023}]\label{col1}
If there exists a set of representative   information vectors $u_1,  \ldots, u_{E}$ with $\{f(u_1),  \ldots,$  $f(u_E)\} = \mathrm{Im}(f)$ and $\mathcal{D}_f(t, u_1, \ldots, u_E)=\mathcal{D}_f(t, f_1, \ldots, f_E)$, then
$$r_f(k, t) = N(\mathcal{D}_f(t, f_1, f_2, \ldots, f_E)).$$
\end{corollary}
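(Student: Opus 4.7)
The plan is to sandwich $r_f(k,t)$ between a matching upper bound and lower bound, both supplied by the preceding theorems, using the hypothesis of the corollary only to equate the two sides.

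First, I would apply Theorem \ref{thm2} directly: for any function $f$, we have
$$r_f(k,t) \le N(\mathcal{D}_f(t, f_1, f_2, \ldots, f_E)).$$
This gives the upper bound for free, without using the hypothesis.

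For the matching lower bound, I would invoke Theorem \ref{thm1} on the specific set of representative information vectors $u_1, \ldots, u_E$ guaranteed by the hypothesis, obtaining
$$r_f(k,t) \ge N(\mathcal{D}_f(t, u_1, \ldots, u_E)).$$
By assumption, $\mathcal{D}_f(t, u_1, \ldots, u_E) = \mathcal{D}_f(t, f_1, \ldots, f_E)$ as matrices, so the smallest length of a $\mathcal{D}$-code is the same for both; that is, $N(\mathcal{D}_f(t, u_1, \ldots, u_E)) = N(\mathcal{D}_f(t, f_1, \ldots, f_E))$. Chaining the two inequalities collapses them to the claimed equality.

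There is no real obstacle here; the corollary is essentially a bookkeeping statement saying that whenever one can select one information vector per function value in such a way that pairwise Hamming distances between these representatives already realize the minimum distances $d(f_i,f_j)$ appearing in the FDM, the upper bound of Theorem \ref{thm2} and the lower bound of Theorem \ref{thm1} coincide. The only thing worth emphasizing in the write-up is that $N(\cdot)$ depends only on the matrix of distance requirements, so equality of matrices immediately gives equality of the associated $N$ values.
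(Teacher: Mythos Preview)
Your proposal is correct and is exactly the intended argument: the paper states this corollary without proof (it is cited from \cite{LBWY2023}), and the sandwiching of $r_f(k,t)$ between the lower bound of Theorem~\ref{thm1} applied to the representatives $u_1,\ldots,u_E$ and the upper bound of Theorem~\ref{thm2} is precisely how the result follows from the two preceding theorems.
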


\begin{theorem}[Plotkin bound \cite{P1960}]\label{thm4}
 Let $A(n, d)$ be the maximal number of possible codewords in a binary code of length $n$ and minimum distance $d$.  If $d$ is even and $2d>n$, then 
$$A(n, d)\leq 2 \left \lfloor \frac{d}{2d-n} \right\rfloor.$$
If $d$ is odd and $2d+1>n$, then
$$A(n, d)\leq 2\left\lfloor {\frac {d+1}{2d+1-n}}\right\rfloor.$$
\end{theorem}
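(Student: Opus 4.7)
The plan is to prove both parts of the bound by a single classical double-counting argument on the sum of pairwise Hamming distances, following Plotkin's original approach. Let $C\subseteq\mathbb{F}_2^n$ be a binary code with $M=A(n,d)$ codewords and minimum distance $d$, and consider
\[
S \;=\; \sum_{x,y\in C,\, x\neq y} d(x,y).
\]
First I would establish the lower bound $S\ge M(M-1)d$, which is immediate because each of the $M(M-1)$ ordered pairs of distinct codewords contributes at least $d$.

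Next I would establish the upper bound by summing the contribution of each coordinate. If $a_i$ denotes the number of codewords with a $1$ in position $i$, then position $i$ contributes $2a_i(M-a_i)$ to $S$. Since $a_i(M-a_i)\le \lfloor M^2/4\rfloor$, summing over the $n$ coordinates gives $S\le 2n\lfloor M^2/4\rfloor$. Combining the two bounds yields the key inequality $M(M-1)d\le 2n\lfloor M^2/4\rfloor$.

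For the case $d$ even with $2d>n$, I would split on the parity of $M$. If $M$ is even, the inequality simplifies to $M\le 2d/(2d-n)$, and since $M$ is an even integer, $M\le 2\lfloor d/(2d-n)\rfloor$. If $M$ is odd, it simplifies to $M\le n/(2d-n) = 2d/(2d-n)-1$, and a short argument (writing $q=\lfloor d/(2d-n)\rfloor$, so that $2q\le 2d/(2d-n)<2q+2$, and using that $M$ is odd) again gives $M\le 2q$. This establishes the even-$d$ bound. For the case $d$ odd with $2d+1>n$, I would use the standard parity-extension trick: appending an overall parity check converts any $(n,M,d)$ code with $d$ odd into an $(n+1,M,d+1)$ code, so $A(n,d)\le A(n+1,d+1)$. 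Applying the already-proven even-$d$ bound with parameters $(n+1,d+1)$ yields $A(n,d)\le 2\lfloor (d+1)/(2d+1-n)\rfloor$, as required.

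The main obstacle is the parity refinement that upgrades the fractional bound $2d/(2d-n)$ to the floor $2\lfloor d/(2d-n)\rfloor$. The double-counting step itself is routine, but one must carefully handle both parities of $M$ and verify that the odd-$M$ subcase also lands inside the stated bound; after that, the odd-$d$ statement follows cleanly by parity extension. No new ideas beyond those available in 1960 are needed, and none of the machinery of function-correcting codes developed earlier in the paper enters the argument.
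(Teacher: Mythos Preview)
Your proof is the standard Plotkin double-counting argument and is correct as written; the parity split on $M$ and the parity-extension reduction for odd $d$ are handled properly. Note, however, that the paper does not supply its own proof of this theorem: Theorem~\ref{thm4} is simply quoted from \cite{P1960} as a classical result and used later (in Lemma~\ref{N_4}) without reproof. So there is no paper proof to compare against; what you have reproduced is essentially Plotkin's original argument, which is exactly what the citation points to.
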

The following bound is a generalization of the Plotkin bound on codes with irregular distance requirements.
\begin{theorem}[\kern-0.2em\cite{LBWY2023}]\label{thm3} 
For any distance matrix $\mathcal{D} \in \mathbb{N}^{M \times M}$,
$$N(\mathcal{D}) \geq \begin{cases}
\frac{4}{M^2} \sum_{i,j,i<j} [D]_{i, j} & \text{if $M$ even,} \\
\frac{4}{M^2-1} \sum_{i, j, i<j} [D]_{i, j} & \text{if $M$ odd}.
\end{cases}$$
\end{theorem}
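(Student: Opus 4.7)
The plan is to mirror the classical double-counting proof of the Plotkin bound, with the uniform minimum distance $d$ replaced by the entry-wise requirements $[\mathcal{D}]_{i,j}$. Let $\mathcal{P} = \{p_1, \ldots, p_M\} \subseteq \mathbb{F}_2^n$ be a $\mathcal{D}$-code achieving the optimal length $n = N(\mathcal{D})$, so that $d(p_i, p_j) \geq [\mathcal{D}]_{i,j}$ for all $i,j$. The key quantity I would estimate in two different ways is the total pairwise Hamming distance
\begin{equation*}
S = \sum_{1 \leq i < j \leq M} d(p_i, p_j).
\end{equation*}

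For a lower bound on $S$, I would simply sum the $\mathcal{D}$-code defining inequality over all pairs, obtaining $S \geq \sum_{i<j}[\mathcal{D}]_{i,j}$. For an upper bound on $S$, I would exchange the order of summation and work column by column: if $a_k$ denotes the number of codewords of $\mathcal{P}$ having a $1$ in the $k$-th coordinate, then that coordinate contributes exactly $a_k(M - a_k)$ to $S$, since each (0,1)-pair in the column accounts for one unit of Hamming distance. Maximizing the integer-valued function $x(M-x)$ over $x \in \{0,1,\ldots,M\}$ yields $M^2/4$ when $M$ is even and $(M^2-1)/4$ when $M$ is odd, so summing over the $n$ coordinates gives $S \leq n \lfloor M^2/4 \rfloor$.

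Combining the two inequalities and solving for $n$ produces the two cases of the bound exactly as stated. The one conceptual point worth flagging, and the reason the argument generalizes cleanly from the usual Plotkin bound, is that the column-based upper bound on $S$ is a purely combinatorial feature of a collection of $M$ binary vectors and is completely insensitive to the entries of $\mathcal{D}$; all information about $\mathcal{D}$ is absorbed into the lower bound on $S$. I do not anticipate any significant obstacle beyond bookkeeping of the even/odd cases when extracting $\lfloor M^2/4 \rfloor$ from $a_k(M - a_k)$.
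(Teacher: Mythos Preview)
Your argument is correct and is exactly the standard Plotkin-style double-counting proof: lower bound the total pairwise distance $S$ by $\sum_{i<j}[\mathcal{D}]_{i,j}$, upper bound it columnwise by $n\lfloor M^2/4\rfloor$, and solve for $n$. Note, however, that the paper itself does not prove this theorem; it is quoted from~\cite{LBWY2023} as a generalized Plotkin bound, and your proposal reproduces the argument one finds in that reference.
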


We use some combinatorial notions in this paper; see \cite{Cameron1994} for details.  
\begin{definition}[Totally ordered set]
A set $P$ with a binary relation $\prec$ is a \emph{totally ordered set} if it satisfies: 
(1) antisymmetry: if $a \prec b$ and $b \prec a$, then $a=b$; 
(2) transitivity: if $a \prec b$ and $b \prec c$, then $a \prec c$; 
(3) totality: for any $a,b \in P$, either $a \prec b$, $b \prec a$, or $a=b$.
\end{definition}
\begin{definition}[Contiguous block]  
A subset $I$ of a totally ordered set $(P,\prec)$ is called a \emph{contiguous block}
(or interval) if for all $a,b\in I$ with $a\prec b$, every $c\in P$
satisfying $a\prec c\prec b$ also belongs to $I$.
\end{definition}


\section{MAIN RESULTS}
\label{main_results}
In this section, we first define locally $(\rho, \lambda)$-bounded functions. Then, we present some results on function-correcting codes for these functions. We also demonstrate that for a fixed $\rho$, any function $f:\mathbb{F}_2^k \rightarrow S,\ |S|<\infty$, will be a locally $(\rho, \lambda)$-bounded function for a suitably chosen value of $\lambda$.

\begin{definition}[Function ball \cite{LBWY2023}]\label{FB}
The \emph{function ball} of a function $f:\mathbb{F}_2^k \rightarrow S$ with radius $\rho$ around $u \in \mathbb{F}_2^k$
is defined by
$$B_f (u, \rho) = \{f(u') | u' \in \mathbb{F}_2^k \ \text{and} \ d(u, u') \leq \rho\}.$$
\end{definition}

\begin{definition}[Locally bounded function] \label{rl_func}
A function $f:\mathbb{F}_2^k \rightarrow S$ is said to be a \emph{locally $(\rho,\lambda)$-bounded} 
function if for all $u \in \mathbb{F}_2^k$,
$$|B_f (u, \rho)| \leq \lambda.$$
\end{definition}


The following lemma will be used in an FCC construction later. Therefore, throughout the remainder of this paper, we assume that all locally $(\rho,\lambda)$-bounded functions under consideration also satisfy the assumption in Lemma~\ref{col}, namely that each neighborhood $B_f(u,\rho)$ forms a contiguous block under a fixed total order on $\mathrm{Im}(f)$. We refer to this assumption as the \emph{contiguity condition}, and all subsequent constructions that make use of Lemma~\ref{col} are presented under this assumption.


\begin{lemma}\label{col}
Let $f:\mathbb{F}_2^k \to S$ be a locally $(\rho,\lambda)$-bounded function.
Assume that there exists a total order $\prec$ on $\operatorname{Im}(f)$ such that for every $u \in \mathbb{F}_2^k$, the set $B_f(u,\rho)$ forms a contiguous block of consecutive elements with respect to $\prec$. Then there exists a mapping
$
\mathrm{Col}_f : \mathbb{F}_2^k \to [\lambda]
$
such that for any $u,v \in \mathbb{F}_2^k$ with $d(u,v) \le \rho$ and $f(u)\neq f(v)$, we have $\mathrm{Col}_f(u)\neq \mathrm{Col}_f(v)$.
\end{lemma}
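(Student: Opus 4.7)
The plan is to first color the image $\operatorname{Im}(f)$ with $\lambda$ colors in a way that respects the contiguity condition, and then pull this coloring back to $\mathbb{F}_2^k$ via $f$. Since $\operatorname{Im}(f)$ is finite (its cardinality is bounded by $2^k$), I can enumerate it according to $\prec$ as $f_1 \prec f_2 \prec \cdots \prec f_E$.

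With this enumeration in hand, I would define a coloring $c: \operatorname{Im}(f) \to [\lambda]$ by the cyclic rule $c(f_i) = ((i-1) \bmod \lambda) + 1$, and then set $\mathrm{Col}_f(u) := c(f(u))$. The key observation is that within any window of $\lambda$ consecutive indices the residues modulo $\lambda$ are all distinct, so any contiguous block of $\operatorname{Im}(f)$ containing at most $\lambda$ elements is rainbow-colored by $c$.

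To verify the conclusion, consider $u, v \in \mathbb{F}_2^k$ with $d(u,v) \le \rho$ and $f(u) \neq f(v)$. Both $f(u)$ and $f(v)$ belong to $B_f(u,\rho)$, which by the contiguity assumption is a contiguous block under $\prec$, and by the local $(\rho,\lambda)$-boundedness has cardinality at most $\lambda$. Therefore $f(u)$ and $f(v)$ are two distinct elements of a contiguous block of size at most $\lambda$, and by the rainbow property above they receive distinct colors under $c$, giving $\mathrm{Col}_f(u) \neq \mathrm{Col}_f(v)$.

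There is essentially no technical obstacle here; the proof is a short application of the pigeonhole principle disguised as a modular-arithmetic coloring, and the only thing to be careful about is to state clearly why a contiguous block of size $\le \lambda$ has all its cyclic-mod-$\lambda$ labels distinct. I would also briefly note the role of the contiguity hypothesis: without it, the local bound $|B_f(u,\rho)| \le \lambda$ alone would not suffice, because different balls could cover overlapping but non-consecutive portions of $\operatorname{Im}(f)$, and no global coloring with $\lambda$ colors need exist.
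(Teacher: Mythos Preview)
Your proposal is correct and essentially identical to the paper's own proof: both enumerate $\operatorname{Im}(f)$ according to $\prec$, assign colors cyclically modulo $\lambda$, pull back via $f$, and verify the separation property using the fact that a contiguous block of size at most $\lambda$ receives distinct residues. The only differences are cosmetic (zero-based versus one-based indexing), and your closing remark on the necessity of the contiguity hypothesis is a useful addition not present in the paper.
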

\begin{proof}
Let $\operatorname{Im}(f) = \{ y_0 \prec y_1 \prec \cdots \prec y_{E-1}\}$ be the image of $f$ ordered by $\prec$. Define a coloring $\gamma:\operatorname{Im}(f)\to[\lambda]$ by $ \gamma(y_j) \;=\; 1 + (j \bmod \lambda).$
By construction, the colors $\{1,2,\dots,\lambda\}$ are assigned cyclically along the order $\prec$. In particular, if $I = \{y_a,y_{a+1},\dots,y_{a+m-1}\}$ is any contiguous block with $m \le \lambda$, then $\gamma$ is injective on $I$, since the residues $a,a+1,\dots,a+m-1$ modulo $\lambda$ are all distinct.
Now define $\mathrm{Col}_f : \mathbb{F}_2^k \to [\lambda]$ by $\mathrm{Col}_f(u) \;=\; \gamma(f(u)).$ 
We claim that $\mathrm{Col}_f$ has the desired separation property. Suppose $u,v \in \mathbb{F}_2^k$ satisfy $d(u,v)\le \rho$ and $f(u)\neq f(v)$. By definition, $f(v)\in B_f(u,\rho)$, so both $f(u)$ and $f(v)$ belong to the same block $B_f(u,\rho)$ of size at most $\lambda$. Since $\gamma$ is injective on every such block, it follows that $\gamma(f(u)) \neq \gamma(f(v))$. Consequently, $\mathrm{Col}_f(u)\neq \mathrm{Col}_f(v)$, as required.
\end{proof}

In the following example, we present a function which satisfies contiguity condition. 

\begin{example}
Consider the following lexicographic rearrangement function on $\mathbb{F}_2^k$:
\[
f : \mathbb{F}_2^k \to \mathbb{F}_2^k, \qquad 
f(u) = 0^{\,k - \mathrm{wt}(u)} 1^{\,\mathrm{wt}(u)},
\]
where $\mathrm{wt}(u)$ denotes the Hamming weight of $u$. 
Equivalently, $f(u)$ is obtained by rearranging the coordinates of $u$ in non-decreasing order. 
For example, we have
$
f(010100) = 000011, f(111001) = 001111.
$

\noindent \textit{Contiguity condition:}
For a fixed $u \in \mathbb{F}_2^k$ and radius $\rho$, the function ball is
$
B_f(u,\rho) = \{ f(v) : d(u,v) \le \rho \}.
$ 
For any $v\in \mathbb{F}_2^k$ with $d(u,v)\le \rho$, we have
\[
\max\{0, \mathrm{wt}(u) - \rho\} \le \mathrm{wt}(v) \le 
\min\{\mathrm{wt}(u) + \rho, k\}.
\]
That means $\mathrm{wt}(v) \in W_{u,\rho}$, where 
$
W_{u,\rho} = \big\{ \max\{0, \mathrm{wt}(u) - \rho\}, \ldots,
           \min\{\mathrm{wt}(u) + \rho, k\} \big\} \in \mathbb{N}.
$
Conversely,  for any element $a\in W_{u,\rho}$, we have a vector $v$ in $\mathbb{F}_2^k$ with $\mathrm{wt}(v)=a$ and $d(u,v) \le \rho$ which can be obtain by changing $a$ number of entries in vector $u$. Thus, we have 
\[
B_f(u,\rho) = \{\, 0^{\,k-j} 1^{\,j} \ : \ j \in W_{u,\rho} \,\}.
\]
Since $W_{u,\rho}$ contains consecutive integers, $B_f(u,\rho)$ forms a contiguous block under the lexicographic order.

\end{example}




\subsection{An FCC for locally $(2t, 4)$-bounded functions}

In this section, we construct an FCC for a locally $(2t, \lambda)$-bounded function $f$ with $\lambda = 4$. Using this construction, we obtain an upper bound on the optimal redundancy of any $(f, t)$-FCC.
The following lemma will be generalized for locally $(2t, \lambda)$-bounded functions for any $\lambda$ in the next subsection, with a proof following the same method. To make it easier to understand, we begin with the case $\lambda = 4$.

\begin{lemma} \label{lem_bound4}
Let $t$ be a positive integer. For any locally $(2t, 4)$-bounded function $f$, the optimal redundancy of an $(f, t)$-FCC is bounded from above as follows.
\begin{equation}\label{bound4}
r_f(k, t) \leq 3t.
\end{equation}
\end{lemma}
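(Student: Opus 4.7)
The plan is to exhibit an explicit $(f,t)$-FCC of redundancy $3t$; the bound in~(\ref{bound4}) then follows directly from the definition of $r_f(k,t)$. The construction uses two ingredients: the coloring guaranteed by Lemma~\ref{col} and a classical Plotkin-optimal binary code on four codewords.

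First, I would apply Lemma~\ref{col} to the locally $(2t,4)$-bounded function $f$ (which, by standing assumption, satisfies the contiguity condition) to obtain a map $\mathrm{Col}_f : \mathbb{F}_2^k \to [4]$ such that any two messages $u_1,u_2$ with $d(u_1,u_2)\le 2t$ and $f(u_1)\neq f(u_2)$ receive distinct colors. Next, I would choose four redundancy vectors $q_1,q_2,q_3,q_4 \in \mathbb{F}_2^{3t}$ by taking the $t$-fold bitwise repetition of the length-$3$ code $\{000,110,101,011\}$, namely
\[
q_1 = 0^t 0^t 0^t, \quad q_2 = 1^t 1^t 0^t, \quad q_3 = 1^t 0^t 1^t, \quad q_4 = 0^t 1^t 1^t,
\]
and a direct count shows that all six pairwise Hamming distances are exactly $2t$. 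Define the systematic encoding $\mathcal{C}(u) := (u, \, q_{\mathrm{Col}_f(u)})$, which has redundancy $3t$.

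The verification that $\mathcal{C}$ is an $(f,t)$-FCC splits into two cases. Fix $u_1,u_2$ with $f(u_1)\neq f(u_2)$; in particular $u_1\neq u_2$, so $d(u_1,u_2)\ge 1$. If $d(u_1,u_2)\ge 2t+1$, the systematic prefix alone delivers the required distance. If instead $d(u_1,u_2)\le 2t$, then both $f(u_1)$ and $f(u_2)$ lie in the contiguous block $B_f(u_1,2t)$ of size at most $4$, so $\mathrm{Col}_f(u_1)\neq \mathrm{Col}_f(u_2)$ by Lemma~\ref{col}, whence $d(q_{\mathrm{Col}_f(u_1)},q_{\mathrm{Col}_f(u_2)}) = 2t$; summing the contributions gives $d(\mathcal{C}(u_1),\mathcal{C}(u_2)) \ge d(u_1,u_2) + 2t \ge 1 + 2t = 2t+1$.

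The crux of the argument is balancing the block size $\lambda = 4$ against the redundancy budget: the worst case $d(u_1,u_2) = 1$ forces the four inner codewords to have pairwise distance at least $2t$, and the Plotkin bound makes this tight exactly at length $3t$. Once this combinatorial target is identified, the remaining distance-addition argument is routine. I expect the principal conceptual challenge to lie not here but in the forthcoming generalization to arbitrary $\lambda$, where this explicit Plotkin-optimal template must be replaced by a code whose existence is abstracted into the quantity $N(M,d)$ used in the next subsection.
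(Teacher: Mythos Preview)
Your argument is correct and mirrors the paper's proof almost exactly: both invoke Lemma~\ref{col} to obtain a $4$-coloring, attach to each color one of four length-$3t$ parity vectors at mutual distance $2t$, and verify the $(f,t)$-FCC property via the same two-case split on whether $d(u_1,u_2)\le 2t$. The only cosmetic difference is that the paper writes the parity vectors as the codeword repetition $(u'_p)^t$ of $\{000,110,101,011\}$ whereas you use the coordinate-wise repetition $0^t0^t0^t,\,1^t1^t0^t,\ldots$; these differ by a permutation of positions and have identical pairwise distances, so the proofs are essentially the same.
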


\begin{proof}
Let $f$ be a locally $(2t, 4)$-bounded function and $u \in \mathbb{F}_2^k $ be an information symbol vector. By Lemma \ref{col}, there exists a mapping $\mathrm{Col}_f:\mathbb{F}_2^k \rightarrow [4]$ for function $f$ such that for any $u,v \in \mathbb{F}_2^k $, $\mathrm{Col}_f (u) \neq \mathrm{Col}_f(v)$ if $f(u)\ne f(v)$ and $d(u,v)\leq 2t$.
Define an encoding function $\mathrm{Enc}: \mathbb{F}_2^k \rightarrow \mathbb{F}_2^{k+3t}$ as 
\begin{align*}
\mathrm{Enc}(u)&=(u, u_p), \ \text{where} \ u_p=(u'_p)^t \ \text {and} \\
 u'_p &= \begin{cases}
000 & \text{if} \ \mathrm{Col}_f(u)=1, \\
110 & \text{if} \ \mathrm{Col}_f(u)=2, \\
101 & \text{if} \ \mathrm{Col}_f(u)=3, \\
011 & \text{if} \ \mathrm{Col}_f(u)=4. \\
\end{cases}
\end {align*}
 Now we prove that the encoding function defined above is an $(f, t)$-FCC with redundancy $r=3t$. Let $u,v \in \mathbb{F}_2^k$ be such that $f(u) \neq f(v)$. We have 
\begin{equation} \label{mindist}
d(\mathrm{Enc}(u), \mathrm{Enc}(v)) = d(u, v) + d(u_p, v_p).
\end{equation}
There are the following two possible cases with vectors $u$ and $v$.

\textit{Case 1}: If $d(u,v) \geq 2t+1$, then, by \eqref{mindist}, we have
$d(\mathrm{Enc}(u), \mathrm{Enc}(v)) = d(u, v) + d(u_p, v_p)\geq 2t+1.$

\textit{Case 2}: If $d(u,v)\leq 2t$ then $f(v) \in B_f(u, 2t)$, and by the definition of the function $\mathrm{Col}_f:  \mathbb{F}_2^k \rightarrow [4]$, we have $\mathrm{Col}_f(u) \neq \mathrm{Col}_f(v)$. Therefore, $d(u'_p, v'_p) =2$ and $d(u_p, v_p) =t \cdot d(u'_p, v'_p) =2t$. Since $u \neq v$, we have $d(u,v) \geq 1$ and 
$d(\mathrm{Enc}(u), \mathrm{Enc}(v)) = d(u, v) + d(u_p, v_p)\geq 2t+1.$
\end{proof}

\begin{theorem}[Optimality]\label{opti}
For a locally $(2t, 4)$-bounded function $f$ with $|\mathrm{Im}(f)| \geq 3$, if there exists $u_1, u_2, u_3 \in \mathbb{F}_2^k$ with $f(u_i) \neq f(u_j)$ for $i,j\in[3], i\ne j$, such that 
$d(u_1, u_2) =1, d(u_3, u_1)=1 \ \text{and} \ d(u_3, u_2)=2,$
then $r_f(k, t)=3t$ is optimal.
\end{theorem}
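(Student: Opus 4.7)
The plan is to combine the upper bound $r_f(k,t)\le 3t$ from Lemma~\ref{lem_bound4} with a matching lower bound $r_f(k,t)\ge 3t$ obtained from the three prescribed information vectors $u_1,u_2,u_3$. By Theorem~\ref{thm1}, any subset of information vectors yields a lower bound through the distance requirement matrix, namely $r_f(k,t)\ge N(\mathcal{D}_f(t,u_1,u_2,u_3))$. So the task reduces to showing that $N(\mathcal{D}_f(t,u_1,u_2,u_3))\ge 3t$.

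First I would compute the entries of $\mathcal{D}_f(t,u_1,u_2,u_3)$ explicitly from the hypotheses. Since $f(u_i)\neq f(u_j)$ for all $i\neq j$ and the pairwise Hamming distances are $d(u_1,u_2)=1$, $d(u_1,u_3)=1$, $d(u_2,u_3)=2$, the off-diagonal entries become $\max(2t+1-1,0)=2t$, $\max(2t+1-1,0)=2t$, and $\max(2t+1-2,0)=2t-1$, respectively, with zeros on the diagonal. The sum of the strictly upper-triangular entries is therefore $2t+2t+(2t-1)=6t-1$.

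Next, I would invoke the generalized Plotkin bound (Theorem~\ref{thm3}) with $M=3$ (odd), which gives
\begin{equation*}
N(\mathcal{D}_f(t,u_1,u_2,u_3)) \;\ge\; \frac{4}{M^2-1}\sum_{i<j}[\mathcal{D}_f]_{i,j} \;=\; \frac{4}{8}(6t-1) \;=\; \frac{6t-1}{2}.
\end{equation*}
Since $N(\mathcal{D})$ is a nonnegative integer and $(6t-1)/2 = 3t - 1/2$, we conclude $N(\mathcal{D}_f(t,u_1,u_2,u_3))\ge \lceil 3t - 1/2\rceil = 3t$. Combining with Lemma~\ref{lem_bound4} yields $r_f(k,t)=3t$.

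There is no real obstacle here beyond verifying that the ceiling of the Plotkin bound gives exactly $3t$ and not something smaller; the argument is essentially an application of Theorems~\ref{thm1} and~\ref{thm3} to the carefully chosen triple $(u_1,u_2,u_3)$ supplied by the hypothesis. The role of the hypothesis $|\mathrm{Im}(f)|\ge 3$ is just to guarantee that three pairwise distinct function values can coexist, so that the DRM entries do not collapse to zero.
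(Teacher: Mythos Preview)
Your proposal is correct and follows essentially the same approach as the paper: compute the $3\times 3$ distance requirement matrix, apply the generalized Plotkin bound (Theorem~\ref{thm3}) with $M=3$ to get $N(\mathcal{D})\ge 3t-\tfrac12$, round up by integrality, and combine the resulting lower bound via Theorem~\ref{thm1} with the upper bound from Lemma~\ref{lem_bound4}. The only cosmetic difference is that you make the invocation of Lemma~\ref{lem_bound4} explicit, whereas the paper leaves it implicit in the final line.
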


\begin{proof}
For $u_1, u_2, u_3 \in \mathbb{F}_2^k$, we have the distance requirement matrix
$$\mathcal{D}_f(t, u_1, u_2, u_3) = \left[\begin{matrix} 0 & 2t & 2t \\
2t & 0& 2t-1 \\
2t &2t-1 & 0  
\end{matrix}\right].$$
From the generalized Plotkin bound given in Theorem \ref{thm3}, we have 
\begin{align*}
N(\mathcal{D}_f(t, u_1, u_2, u_3)) &\geq \frac{4}{3^2-1} (D_{1,2}  + D_{1,3} + D_{2,3}) \\
& \geq \frac{1}{2} (6t-1) = 3t-\frac{1}{2}.
\end{align*}
Since $N(\mathcal{D}_f(t, u_1, u_2, u_3))$ is an integer, by Theorem \ref{thm1} we have 
$r_f(k,t) \geq N(\mathcal{D}_f(t, u_1, u_2, u_3)) \geq 3t.$
Therefore, $r_f(k, t)=3t$.
\end{proof}

\subsection{An upper bound for general locally bounded functions}
The upper bound on the optimal redundancy given in \eqref{bound4} for any locally $(2t, 4)$-bounded function can be generalized for any locally $(2t, \lambda)$-bounded function as follows.

\begin{theorem}\label{lem_bound_lambda}
Let $t$ be a positive integer. For any locally $(2t, \lambda)$-bounded function $f$, the optimal redundancy of an $(f, t)$-FCC is bounded from above as follows.
\begin{equation}\label{bound_lambda}
r_f(k, t) \leq N (\lambda, 2t),
\end{equation}
where $N(\lambda, 2t)$ is the minimum length of a binary error-correcting code with $\lambda$ codewords and minimum distance $2t$. 
\end{theorem}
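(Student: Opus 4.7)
The plan is to adapt the construction from Lemma~\ref{lem_bound4} by replacing the specific length-$3t$ parity block with an arbitrary binary error-correcting code of the appropriate parameters. The key observation is that the role of the code $\{000,110,101,011\}$ (repeated $t$ times) in the $\lambda=4$ proof is purely to supply $\lambda$ ``color vectors'' whose pairwise Hamming distance is at least $2t$. Any binary code with $\lambda$ codewords and minimum distance $2t$ will do, and the shortest such code has length $N(\lambda,2t)$ by definition.

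First I would invoke Lemma~\ref{col} to obtain a coloring $\mathrm{Col}_f:\mathbb{F}_2^k\to[\lambda]$ such that $\mathrm{Col}_f(u)\neq \mathrm{Col}_f(v)$ whenever $d(u,v)\le 2t$ and $f(u)\neq f(v)$; this step uses the contiguity condition that is in force throughout. Next, by definition of $N(\lambda,2t)$, I would fix a binary $(N(\lambda,2t),\lambda,2t)$ code $\mathcal{Q}=\{q_1,\ldots,q_\lambda\}\subseteq\mathbb{F}_2^{N(\lambda,2t)}$ and define the systematic encoding $\mathrm{Enc}:\mathbb{F}_2^k\to\mathbb{F}_2^{k+N(\lambda,2t)}$ by $\mathrm{Enc}(u)=(u,\,q_{\mathrm{Col}_f(u)})$.

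To verify that $\mathrm{Enc}$ is an $(f,t)$-FCC, I would take arbitrary $u,v\in\mathbb{F}_2^k$ with $f(u)\neq f(v)$ and write $d(\mathrm{Enc}(u),\mathrm{Enc}(v))=d(u,v)+d(q_{\mathrm{Col}_f(u)},q_{\mathrm{Col}_f(v)})$. If $d(u,v)\ge 2t+1$, the conclusion is immediate. Otherwise $1\le d(u,v)\le 2t$, so Lemma~\ref{col} gives $\mathrm{Col}_f(u)\neq\mathrm{Col}_f(v)$; the corresponding $\mathcal{Q}$-codewords are therefore distinct and lie at distance at least $2t$, and adding $d(u,v)\ge 1$ yields $d(\mathrm{Enc}(u),\mathrm{Enc}(v))\ge 2t+1$, as required. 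This bounds $r_f(k,t)$ by the length of $\mathrm{Enc}$, which is $N(\lambda,2t)$.

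The main obstacle is conceptual rather than technical: recognizing that the concrete length-$3$ pattern in Lemma~\ref{lem_bound4} was not essential, and that only the abstract existence of a length-$N(\lambda,2t)$ code with $\lambda$ codewords at pairwise distance $\ge 2t$ matters. Once this abstraction is made, the case analysis is a direct transcription of the $\lambda=4$ argument. As a sanity check, one should note that since repeating any $(n,\lambda,2)$ code $t$ times yields an $(nt,\lambda,2t)$ code, one always has $N(\lambda,2t)\le t\cdot N(\lambda,2)$; specialized to $\lambda=4$ this gives $N(4,2t)\le 3t$, so that the new bound recovers Lemma~\ref{lem_bound4} and is in general strictly sharper.
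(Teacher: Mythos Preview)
Your proof is correct and follows essentially the same approach as the paper: invoke Lemma~\ref{col} to obtain the coloring, assign to each color a codeword of a shortest $(N(\lambda,2t),\lambda,2t)$ code, and run the identical two-case distance analysis. The only differences are cosmetic (notation $\mathcal{Q},q_i$ versus $\mathcal{C},C_i$) and your added sanity-check paragraph, which the paper places separately as Lemma~\ref{N_4}.
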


\begin{proof}
The proof follows in a similar manner as the proof of Lemma \ref{lem_bound4}. Let $f$ be a locally $(2t, \lambda)$-bounded function. 
By Lemma \ref{col}, there exists a mapping $\mathrm{Col}_f:\mathbb{F}_2^k \rightarrow [\lambda]$ for  a function $f$ such that for any $u,v \in \mathbb{F}_2^k $,  $\mathrm{Col}_f (u) \neq \mathrm{Col}_f(v)$ if $f(u)\ne f(v)$ and $d(u,v)\leq 2t$. Let $\mathcal{C}$ be a binary error-correcting code with $\lambda$ codewords, minimum distance $2t$ and length $N(\lambda, 2t)$, and let the codewords of $\mathcal{C}$ be denoted by $C_1, C_2, \ldots, C_{\lambda}$.
Define an encoding function $\mathrm{Enc}: \mathbb{F}_2^k \rightarrow  \mathbb{F}_2^{k+N(\lambda, 2t)}$ as 
$$\mathrm{Enc}(u)=(u, u_p), \ \text{where}  \ u_p =C_{ \mathrm{Col}_f(u)}.$$
 Now we prove that the encoding function defined above is an $(f, t)$-FCC with redundancy $r=N(\lambda, 2t)$. Let $u,v \in \mathbb{F}_2^k$ be such that $f(u) \neq f(v)$. 
Then we have the following two possible cases with vectors $u$ and $v$.

\textit{Case 1}: If $d(u,v) \geq 2t+1$, then we have
$$d(\mathrm{Enc}(u), \mathrm{Enc}(v)) = d(u, v) + d(u_p, v_p)\geq 2t+1.$$

\textit{Case 2}: If $d(u,v) \leq 2t$ then $f(v) \in B_f(u, 2t)$, and by the definition of  $\mathrm{Col}_f$, we have $\mathrm{Col}_f(u) \neq \mathrm{Col}_f(v)$. Therefore, $d(u_p, v_p)=d(C_{\mathrm{Col}_f(u)}, C_{\mathrm{Col}_f(v)}) \geq 2t$ as the minimum distance of the code $\mathcal{C}$ is $2t$. Since $u \neq v$, we have $d(u,v) \geq 1$ and 
$d(\mathrm{Enc}(u), \mathrm{Enc}(v)) = d(u, v) + d(u_p, v_p)\geq 2t+1.$
\end{proof}

\begin{remark}\label{rem1}
      Since the maximum of $\lambda$ for a function $f$ is $E$, where $E=|\mathrm{Im}(f)|$, we have $r_f(k, t) \leq N(\lambda, 2t) \leq N(E, 2t)$. For using an ECC on function values, we would need at least $N(E, 2t+1)$ parities to handle $t$ errors. This also validates the fact that the use of FCCs instead of ECCs on function values is beneficial for any arbitrary function, as depicted in Table I in \cite{LBWY2023} for some particular functions.
\end{remark}

We believe that the following lemma has likely been proven somewhere in the literature. However, since we were unable to locate it, we will provide a brief proof.
\begin{lemma}\label{N_4}
Let $N(\lambda, 2t)$ be the minimum length of a binary error-correcting code with $\lambda$ codewords and minimum distance $2t$. Then $N(4, 2t)=3t$.
\end{lemma}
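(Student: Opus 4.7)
The plan is to prove the equality $N(4,2t)=3t$ by establishing matching upper and lower bounds. The upper bound is constructive, and the lower bound follows from the Plotkin bound (Theorem~\ref{thm4}) applied to the parameters $(n,4,2t)$.

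For the upper bound $N(4,2t)\le 3t$, I would exhibit the explicit $4$-element code
\[
\mathcal{C} = \{(000)^t,\ (110)^t,\ (101)^t,\ (011)^t\} \subseteq \mathbb{F}_2^{3t}.
\]
A direct inspection shows that any two of the four length-$3$ seeds $000,110,101,011$ differ in exactly $2$ coordinates, so after $t$-fold repetition any two distinct codewords of $\mathcal{C}$ are at Hamming distance exactly $2t$. Hence $\mathcal{C}$ is a binary code of length $3t$, size $4$, and minimum distance $2t$, which yields $N(4,2t)\le 3t$.

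For the lower bound $N(4,2t)\ge 3t$, I would apply the Plotkin bound to the notation $A(n,d)$. Suppose a binary code of length $n$ with $4$ codewords and minimum distance $2t$ exists; equivalently $A(n,2t)\ge 4$. Assume for contradiction that $n<3t$; in particular $n<4t$, so $2d=4t>n$ and the first case of Theorem~\ref{thm4} (the even-$d$ Plotkin bound) applies:
\[
4 \;\le\; A(n,2t) \;\le\; 2\left\lfloor \frac{2t}{4t-n} \right\rfloor.
\]
This forces $\lfloor 2t/(4t-n)\rfloor \ge 2$, i.e.\ $2t \ge 2(4t-n)$, which rearranges to $n\ge 3t$, contradicting $n<3t$. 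Therefore $N(4,2t)\ge 3t$.

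Combining the two bounds gives $N(4,2t)=3t$. I do not anticipate any real obstacle: the construction $\{(000)^t,(110)^t,(101)^t,(011)^t\}$ is the obvious one given the coloring used in Lemma~\ref{lem_bound4}, and the Plotkin computation is a one-line algebraic manipulation once the parameters are plugged in. The only care needed is to verify that the hypothesis $2d>n$ of the Plotkin bound is indeed satisfied by the proof-by-contradiction assumption $n<3t$, which it clearly is since $n<3t<4t=2d$.
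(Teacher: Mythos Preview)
Your proof is correct and follows essentially the same approach as the paper: an explicit four-word equidistant code for the upper bound and the Plotkin bound for the lower bound. The only cosmetic differences are that your code $\{(000)^t,(110)^t,(101)^t,(011)^t\}$ is the interleaved version of the paper's block-repetition code (equivalent via a coordinate permutation), and you verify the Plotkin hypothesis $2d>n$ by a contradiction assumption $n<3t$ rather than, as the paper does, by invoking the already-established upper bound $N(4,2t)\le 3t$.
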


\begin{proof}
We can easily construct a binary code with length $n=3t$, $M=4$ codewords, and minimum distance $d=2t$  as follows.
\begin{align*}
c_1&= \underbrace{00\ldots0}_{t \ \text{times}}\underbrace{00\ldots0}_{t \ \text{times}}\underbrace{00\ldots0}_{t \ \text{times}},\, 
c_2= \underbrace{11\ldots1}_{t \ \text{times}}\underbrace{11\ldots1}_{t \ \text{times}}\underbrace{00\ldots0}_{t \ \text{times}}, \\
c_3&= \underbrace{11\ldots1}_{t \ \text{times}}\underbrace{00\ldots0}_{t \ \text{times}}\underbrace{11\ldots1}_{t \ \text{times}}, \,
c_4 =\underbrace{00\ldots0}_{t \ \text{times}}\underbrace{11\ldots1}_{t \ \text{times}}\underbrace{11\ldots1}_{t \ \text{times}}.
\end{align*}
Clearly, $N(4, 2t)\leq 3t$. Now we have $2d=4t>3t\geq N(4, 2t)$ and $d$ is even. Using the Plotkin bound given in Theorem~\ref{thm4} for these parameters, we have 
$$4 \leq 2 \left \lfloor \frac{2t}{4t-n} \right \rfloor \leq 2 \left ( \frac{2t}{4t-n} \right ),$$
which implies that $N(4, 2t) \geq n \geq 3t$.  Therefore, $N(4, 2t)=3t$.
\end{proof}
We note that Lemma \ref{lem_bound4} can also be obtained using Theorem \ref{lem_bound_lambda} and Lemma \ref{N_4}.

\subsection{Arbitrary functions as $(\rho, \lambda)$-bounded functions}
Any function can be considered as a locally $(\rho, \lambda)$-bounded function for some values of $\rho$ and $\lambda$. Suppose that we have a function $f:\mathbb{F}_2^k \rightarrow S,\ |S|<\infty$, and want to construct an $(f, t)$-FCC for it. Then we may select the minimal $\lambda$ for which this function is a locally $(2t, \lambda)$-bounded function, which is
\begin{equation}\label{find_l} 
\lambda= \max_{u\in \mathbb{F}_2^k} |B_f(u, 2t)|.
\end{equation}
 Furthermore, if we have a binary error-correcting code with length $N$, $\lambda$ codewords and minimum distance $2t$,  we can construct an $(f, t)$-FCC with redundancy $r=N$ by using the  construction given in the proof of Theorem \ref{lem_bound_lambda}.

Next, we analyze some existing classes of functions on $\mathbb{F}_2^k$ as locally $(2t, \lambda)$-bounded functions.
\begin{itemize}
\item \textbf{Hamming weight function:} This function is defined as $f:\mathbb{F}_2^k\rightarrow \mathbb{N},\ f(u)=wt(u)$, where $wt(u)$ denotes the Hamming weight of the vector $u$. 

\item \textbf{Hamming weight distribution function:} For a given positive integer $T$ called a \emph{threshold}, this function is defined as $f:\mathbb{F}_2^k\rightarrow \mathbb{N},\ f(u)=\Delta_T(u)=\left \lfloor \frac{wt(u)}{T} \right \rfloor$. 
\end{itemize}
Note that the Hamming weight function is a Hamming weight distribution function with threshold $T=1$.

The following lemma gives a suitable value of $\lambda$ for the Hamming weight distribution function with threshold $T$.

\begin{theorem}\label{HWDF}
For a positive integer $t$, a Hamming weight distribution function $\Delta_T$ is a locally $\left( 2t, \left \lfloor \frac{4t}{T} \right \rfloor +2 \right)$-bounded function.
\end{theorem}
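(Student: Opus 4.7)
The plan is to reduce the claim to a purely arithmetic count of how many values the floor map $w \mapsto \lfloor w/T \rfloor$ can take on an integer window of length $4t$, since $\Delta_T(u')$ depends on $u'$ only through $\mathrm{wt}(u')$.

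First I would fix $u \in \mathbb{F}_2^k$ and note that any $u' \in \mathbb{F}_2^k$ with $d(u,u') \leq 2t$ satisfies $|\mathrm{wt}(u') - \mathrm{wt}(u)| \leq 2t$ by the triangle inequality for the Hamming metric. Hence every element of $B_{\Delta_T}(u, 2t)$ has the form $\lfloor w/T \rfloor$ for some integer $w$ in the window
\begin{equation*}
[a, b] \;=\; \bigl[\max(0, \mathrm{wt}(u) - 2t),\; \min(k, \mathrm{wt}(u) + 2t)\bigr],
\end{equation*}
which has length $b - a \leq 4t$. In particular
\begin{equation*}
|B_{\Delta_T}(u, 2t)| \;\leq\; \lfloor b/T \rfloor - \lfloor a/T \rfloor + 1,
\end{equation*}
since $\lfloor w/T \rfloor$ ranges over integers in $[\lfloor a/T \rfloor, \lfloor b/T \rfloor]$ as $w$ ranges over consecutive integers in $[a,b]$.

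Second, I would bound the right-hand side by a direct floor computation. Writing $a = qT + r$ with $0 \leq r \leq T-1$, one obtains $\lfloor b/T \rfloor - \lfloor a/T \rfloor = \lfloor (r + (b-a))/T \rfloor$. Using $r \leq T - 1$ and $b - a \leq 4t$, this is at most $\lfloor (T - 1 + 4t)/T \rfloor = 1 + \lfloor (4t-1)/T \rfloor \leq \lfloor 4t/T \rfloor + 1$. Combined with the previous paragraph, this yields $|B_{\Delta_T}(u,2t)| \leq \lfloor 4t/T \rfloor + 2$, which is exactly the claim.

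There is no serious obstacle; the argument is elementary. The only mildly delicate step is the floor inequality above, where the ``$+1$'' coming from the fractional offset $r/T$ accounts for the possibility that the window $[a,b]$ straddles one extra multiple of $T$; together with the ``$+1$'' from endpoint inclusion, this produces the ``$+2$'' in the final bound. As a sanity check, specializing to $T = 1$ recovers the expected bound of $4t + 2$ distinct Hamming weights in a radius-$2t$ ball, which matches the Hamming weight function case.
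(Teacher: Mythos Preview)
Your proof is correct and follows essentially the same idea as the paper's: both bound the weight of $u'$ to an interval of length at most $4t$ around $\mathrm{wt}(u)$ and then use the floor inequality $\lfloor b/T\rfloor-\lfloor a/T\rfloor\le \lfloor (b-a)/T\rfloor+1$ (equivalently, $\lfloor x/T\rfloor-\lfloor y/T\rfloor\le\lfloor (x-y)/T\rfloor+1$) to get at most $\lfloor 4t/T\rfloor+2$ distinct values. The only stylistic difference is that the paper phrases this as a contradiction argument via extremal $v_1,v_2$, whereas you argue directly via the division-with-remainder decomposition $a=qT+r$; the underlying arithmetic is identical.
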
 

\begin{proof}
We prove this theorem by finding a suitable $\lambda$ as given in \eqref{find_l}. For a Hamming weight distribution function $f=\Delta_T$ with threshold $T$, we claim that 
$$ \max_{u\in \mathbb{F}_2^k} |B_f(u, 2t)| \leq  \left \lfloor \frac{4t}{T} \right \rfloor +2.$$
On the contrary, assume that $ |B_f(u, 2t)| >  \left \lfloor \frac{4t}{T} \right \rfloor +2$ for some $u \in \mathbb{F}_2^k$. Then there exist $v_1, v_2 \in  \mathbb{F}_2^k$ with $d(u, v_1)\leq 2t$ and $d(u, v_2)\leq 2t$ such that $f(v_1)= \max(B_f(u, 2t))$ and $f(v_2)= \min(B_f(u, 2t))$. Clearly, 
$$f(v_1)-f(v_2) \geq \left \lfloor \frac{4t}{T} \right \rfloor +2.$$
Further, we have $\left \lfloor \frac{wt(v_1) -wt(v_2)}{T} \right \rfloor   \geq \left \lfloor \frac{wt(v_1)}{T} \right \rfloor - \left \lfloor \frac{wt(v_2 )}{T} \right \rfloor -1 =f(v_1)-f(v_2)  -1  \geq \left \lfloor \frac{4t}{T} \right \rfloor +1.$
Since $d(v_1, v_2) \leq d(u, v_1) +d(u, v_2) \leq 4t$, we have $wt(v_1) - wt(v_2) \leq d(v_1, v_2) \leq 4t$. Therefore,
\begin{align*}
 \left \lfloor \frac{4t}{T} \right \rfloor  \geq \left \lfloor \frac{wt(v_1) -wt(v_2)}{T} \right \rfloor   & \geq \left \lfloor \frac{4t}{T} \right \rfloor +1, 
\end{align*}
which is a contradiction. Hence $ |B_f(u, 2t)| \leq  \left \lfloor \frac{4t}{T} \right \rfloor +2$ for all $u \in \mathbb{F}_2^k$. Therefore, the function $f$ is a locally $\left( 2t, \left \lfloor \frac{4t}{T} \right \rfloor +2 \right)$-bounded function.
\end{proof}

From Theorem \ref{HWDF}, we directly get the following corollaries.
\begin{corollary}\label{col2}
For a positive integer $t$, the Hamming weight function is a locally $(2t, 4t+2)$-bounded function. 
\end{corollary}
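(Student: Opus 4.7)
The plan is to observe that the Hamming weight function is simply the Hamming weight distribution function $\Delta_T$ with threshold $T = 1$, since $\Delta_1(u) = \lfloor \mathrm{wt}(u)/1 \rfloor = \mathrm{wt}(u)$ for all $u \in \mathbb{F}_2^k$. Given this identification, the corollary reduces to a direct substitution into the bound already established in Theorem \ref{HWDF}.

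Concretely, I would first state explicitly that $f = \Delta_1$ as a function on $\mathbb{F}_2^k$. Then I would invoke Theorem \ref{HWDF} with $T = 1$, which guarantees that $\Delta_1$ is a locally $\left(2t,\, \lfloor 4t/T \rfloor + 2\right)$-bounded function. Finally, I would evaluate $\lfloor 4t/1 \rfloor + 2 = 4t + 2$ to obtain the claimed parameter $\lambda = 4t+2$, so the Hamming weight function is locally $(2t, 4t+2)$-bounded.

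There is no real obstacle here: the work has already been done in the proof of Theorem \ref{HWDF}, and the corollary is nothing more than the $T=1$ specialization. The only small thing to double-check is that the argument in the proof of Theorem \ref{HWDF} remains valid when $T = 1$, but since every inequality $\lfloor \cdot / T \rfloor$ used there trivially holds when $T = 1$ (the floor becomes the identity on $\mathbb{N}$), no additional verification is needed.
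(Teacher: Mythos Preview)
Your proposal is correct and matches the paper's approach exactly: the paper notes that the Hamming weight function is the Hamming weight distribution function with threshold $T=1$, and states that Corollary~\ref{col2} follows directly from Theorem~\ref{HWDF}, which is precisely the substitution $T=1$ yielding $\lfloor 4t/1\rfloor + 2 = 4t+2$ that you describe.
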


\begin{corollary}\label{col3}
Let $t$ be a positive integer. For the Hamming weight distribution function $\Delta_T$,
\begin{itemize}
\item if $T> 4t$, then $\Delta_T$ is a locally $(2t, 2)$-bounded function or, equivalently, a $2t$-locally binary function.
\item if $4t \geq T > 2t$, then $\Delta_T$ is a locally $(2t, 3)$-bounded function.
\item in general, if $\frac{4t}{m-1} \geq T > \frac{4t}{m}$ for $m \in \{2, 3, \ldots, 4t\}$, then $\Delta_T$ is a locally $\left( 2t, m+1\right)$-bounded function.
\end{itemize}
    
\end{corollary}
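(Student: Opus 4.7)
The plan is to obtain Corollary~\ref{col3} as a direct specialization of Theorem~\ref{HWDF}, which already asserts that $\Delta_T$ is locally $\bigl(2t,\lfloor 4t/T\rfloor+2\bigr)$-bounded. All three bullets are essentially a case analysis of the value of $\lfloor 4t/T\rfloor$ depending on the range of $T$ relative to $4t$. Since a locally $(\rho,\lambda)$-bounded function is automatically locally $(\rho,\lambda')$-bounded for any $\lambda'\geq\lambda$ (as enlarging the bound only weakens the condition), it suffices to evaluate the floor on the given interval of $T$ and substitute.

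First I would dispatch the general bullet. Assuming $\tfrac{4t}{m-1}\geq T>\tfrac{4t}{m}$ for some integer $m\in\{2,3,\ldots,4t\}$, I would take reciprocals (the inequalities are all positive) to obtain $m-1\leq \tfrac{4t}{T}<m$, so $\lfloor 4t/T\rfloor = m-1$. Plugging this into Theorem~\ref{HWDF} yields that $\Delta_T$ is locally $(2t,m+1)$-bounded, which is exactly the claim.

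Next I would handle the two explicit bullets as degenerate instances of the same argument. For $T>4t$, one has $0<4t/T<1$ and hence $\lfloor 4t/T\rfloor = 0$, so Theorem~\ref{HWDF} gives $\lambda = 2$, i.e.\ $\Delta_T$ is locally $(2t,2)$-bounded, which coincides with the definition of a $2t$-locally binary function from \cite{LBWY2023}. For $4t\geq T>2t$, one has $1\leq 4t/T<2$, so $\lfloor 4t/T\rfloor=1$ and Theorem~\ref{HWDF} yields $\lambda=3$.

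There is no real obstacle here; the only subtlety worth flagging is that the first bullet ($T>4t$) lies outside the range $m\in\{2,\ldots,4t\}$ considered in the general bullet, so it must be treated separately rather than read off the generic formula. Everything else is a direct substitution into Theorem~\ref{HWDF}, and I expect the proof to be essentially a three-line case distinction.
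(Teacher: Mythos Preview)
Your proposal is correct and matches the paper's approach: the paper states Corollary~\ref{col3} as a direct consequence of Theorem~\ref{HWDF} without further proof, and your case analysis on $\lfloor 4t/T\rfloor$ is exactly the intended specialization. One minor remark: the second bullet is not a degenerate instance but simply the $m=2$ case of the general bullet, so only the first bullet ($T>4t$) genuinely needs separate treatment.
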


An optimal construction of an $(f, t)$-FCC has been given for any locally $(2t, 2)$-bounded function $f$ in \cite{LBWY2023} with redundancy $r_{f}(k, t)=2t$. Furthermore, another optimal construction of a $(\Delta_T, t)$-FCC called Construction 2 was proposed in \cite{LBWY2023} for $4t \geq T > 2t$. Here, we present another simple optimal construction for a $(\Delta_T, t)$-FCC for $4t \geq T > 2t$.

Define an encoding function $\mathrm{Enc}: \mathbb{F}_2^k \rightarrow  \mathbb{F}_2^{k+2t}$ as 
$$\mathrm{Enc}(u)=(u,u_p), \ \text{where} \ u_p=(\Delta_T(u) \bmod{2})^{2t}.$$

 Now we prove that the encoding function defined above is a $(\Delta_T, t)$-FCC with redundancy $r=2t$. Let $u,v \in \mathbb{F}_2^k$ be such that $\Delta_T (u) \neq \Delta_T(v)$. Then we have the following cases.
 
\textit{Case 1}: If $\Delta_T(u)$ and $\Delta_T(v)$ have different parity, then by the definition of the encoding function, $d(u_p, v_p) =2t$. Since $u \neq v$, we have $d(u,v) \geq 1$ and 
$d(\mathrm{Enc}(u), \mathrm{Enc}(v)) = d(u, v) + d(u_p, v_p)\geq 2t+1.$

\textit{Case 2}:  If $\Delta_T(u)$ and $\Delta_T(v)$ have the same parity, then WLOG assuming $\Delta_T (u)>\Delta_T(v)$, we have $$\Delta_T (u)-\Delta_T(v)  \geq 2.$$
Further, we have 
\begin{align*}
 \frac{wt(u)-wt(v)}{T} & \geq  \left \lfloor \frac{wt(u)-wt(v)}{T} \right \rfloor  \\
&\geq  \left \lfloor \frac{wt(u)}{T} \right \rfloor - \left \lfloor \frac{wt(v )}{T} \right \rfloor - 1\\
 &= \Delta_T (u)-\Delta_T(v) -1 \geq 1,
\end{align*}
which implies that $d(u,v) \geq wt(u)-wt(v)  \geq T \geq 2t+1.$
Therefore, we have $d(\mathrm{Enc}(u), \mathrm{Enc}(v)) = d(u, v) + d(u_p, v_p)\geq 2t+1.$

This construction can be generalized for any Hamming weight distribution function $\Delta_T$ with $\frac{4t}{m-1} \geq T > \frac{4t}{m}$, and this will provide an upper bound on the redundancy of a $(\Delta_T, t)$-FCC as given in the following theorem.

\begin{theorem}\label{thm_bound_H}
For the Hamming weight distribution function $\Delta_T$, where $\frac{4t}{m-1} \geq T >\frac{4t}{m}$ for some integer $m \in \{2,3, \ldots, 4t\}$, the optimal redundancy of an FCC is bounded from above as follows.
\begin{equation}\label{bound_H}
r_{\Delta_T}(k, t) \leq N \left(\left \lceil \frac{m}{2} \right \rceil +1, 2t \right),
\end{equation}
where $N(M, d)$ represents the minimum length of a binary error-correcting code with $M$ codewords and minimum distance $d$. Furthermore,
$$r_{\Delta_T}(k, t) \leq 3t \quad \text{if} \ t \geq T > \frac{2t}{3}.$$
\end{theorem}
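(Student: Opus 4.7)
The plan is to generalize the parity-based construction presented immediately above for the case $4t \geq T > 2t$. For the general setting $\tfrac{4t}{m-1} \geq T > \tfrac{4t}{m}$, I would set the modulus $c = \lceil m/2 \rceil + 1$ and define a color map $\mathrm{Col}(u) = \Delta_T(u) \bmod c$. Let $\mathcal{C} = \{C_0, C_1, \ldots, C_{c-1}\}$ be any binary error-correcting code of length $N(c, 2t)$ with $c$ codewords and minimum distance $2t$, and take the encoding $\mathrm{Enc}(u) = (u, C_{\mathrm{Col}(u)})$. The redundancy is then $N(\lceil m/2 \rceil + 1, 2t)$, and what remains is to verify that this encoding is a $(\Delta_T, t)$-FCC.

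To do so, I would pick $u, v \in \mathbb{F}_2^k$ with $\Delta_T(u) \neq \Delta_T(v)$ and split into two cases. If $\mathrm{Col}(u) \neq \mathrm{Col}(v)$, then the parity block contributes at least $2t$ to the Hamming distance while $d(u,v) \geq 1$, giving $d(\mathrm{Enc}(u), \mathrm{Enc}(v)) \geq 2t+1$. If $\mathrm{Col}(u) = \mathrm{Col}(v)$, then the nonzero integer $\Delta_T(u) - \Delta_T(v)$ is divisible by $c$ and so has absolute value at least $c$; reusing the floor-difference estimate already exploited in the preceding construction gives $|wt(u) - wt(v)| \geq (c-1)T = \lceil m/2 \rceil T$, and hence $d(u,v) \geq \lceil m/2 \rceil T$.

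The key technical step, and the main obstacle, is to show $\lceil m/2 \rceil T \geq 2t+1$, so that Case~2 already produces sufficient message-side distance. The hypothesis $T > 4t/m$ yields $\lceil m/2 \rceil T > \lceil m/2 \rceil \cdot 4t/m \geq 2t$ (treating $m$ even and odd separately: both reduce to $\lceil m/2 \rceil \cdot 4t/m \geq 2t$), and then integrality of $\lceil m/2 \rceil T$ upgrades this strict inequality to $\geq 2t+1$. Combining the two cases proves $r_{\Delta_T}(k,t) \leq N(\lceil m/2 \rceil + 1, 2t)$.

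For the refinement, if $t \geq T > 2t/3$, then $4t/T \in [4, 6)$, so the unique integer $m$ satisfying $\tfrac{4t}{m-1} \geq T > \tfrac{4t}{m}$ lies in $\{5, 6\}$. In either case $\lceil m/2 \rceil + 1 = 4$, and Lemma~\ref{N_4} supplies $N(4, 2t) = 3t$, giving $r_{\Delta_T}(k,t) \leq 3t$ as claimed. Everything beyond the $\lceil m/2 \rceil T \geq 2t+1$ calculation is a direct adaptation of the parity construction, so once that integer arithmetic is settled the remaining steps are mechanical.
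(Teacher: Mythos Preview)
Your proposal is correct and follows essentially the same approach as the paper: the same encoding $\mathrm{Enc}(u)=(u,C_{\Delta_T(u)\bmod a})$ with $a=\lceil m/2\rceil+1$, the same two-case analysis (the paper splits on whether $|\Delta_T(u)-\Delta_T(v)|\le a-1$, which is equivalent to your split on whether the colors differ), the same floor-difference estimate yielding $d(u,v)\ge \lceil m/2\rceil T>2t$, and the same appeal to Lemma~\ref{N_4} for the $m\in\{5,6\}$ refinement.
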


\begin{proof}
To prove this theorem we propose the following construction for a $(\Delta_T, t)$-FCC, where $\frac{4t}{m-1} \geq T >\frac{4t}{m}$ and $m \in \{2,3, \ldots, 4t\}$.

\noindent \textbf{Construction:} Let $a=\left\lceil \frac{m}{2} \right \rceil +1$ and $\mathcal{C}$ be a binary error-correcting code with $a$ codewords, minimum distance $2t$, and length $N(a, 2t)$. Let the codewords of $\mathcal{C}$ be denoted by $C_0, C_1, \ldots, C_{a-1}$. Define an encoding function $\mathrm{Enc}: \mathbb{F}_2^k \rightarrow  \mathbb{F}_2^{k+N(a, 2t)}$ as 
$$\mathrm{Enc}(u)=(u, u_p), \ \text{where} \  u_p = C_{\Delta_T(u) \bmod{a}}.$$
 Now we prove that the encoding function defined above is a $(\Delta_T, t)$-FCC with redundancy $r=N(a, 2t)$. Let $u,v \in \mathbb{F}_2^k$ such that $\Delta_T (u) \neq \Delta_T(v)$. Then we have the following cases.

\textit{Case 1}: If $0<|\Delta_T(u)-\Delta_T(v)| \leq a-1$, then
$C_{\Delta_T(u) \bmod{a}} \neq C_{\Delta_T(v) \bmod{a}}$. Since the minimum distance of the code $\mathcal{C}$ is $2t$, we have $d(u_p, v_p) \geq 2t$. Since $u \neq v$, we have $d(u,v) \geq 1$ and 
$d(\mathrm{Enc}(u), \mathrm{Enc}(v)) = d(u, v) + d(u_p, v_p)\geq 2t+1.$

\textit{Case 2}:  If $|\Delta_T(u) - \Delta_T(v)| > a-1$, then WLOG assuming $\Delta_T (u)>\Delta_T(v)$, we have $\Delta_T(u)-\Delta_T(v)  \geq a$. Further,
\begin{align*}
 \frac{wt(u)-wt(v)}{T} 
& \geq \left \lfloor \frac{wt(u)}{T} \right \rfloor - \left \lfloor \frac{wt(v )}{T} \right \rfloor-1 \\
 &=\Delta_T(u)-\Delta_T(v) -1 \geq a-1.
\end{align*}
Since $d(u,v) \geq wt(u)-wt(v)$, we have 
$$d(u, v) \geq (a-1) T  = \left\lceil \frac{m}{2} \right \rceil   T > \left\lceil \frac{m}{2} \right \rceil  \frac{4t}{m} \geq 2t. $$
Therefore, we have $d(u,v) \geq 2t+1$ and
$d(\mathrm{Enc}(u), \mathrm{Enc}(v)) = d(u, v) + d(u_p, v_p)\geq 2t+1.$

Since we know $N(4, 2t)=3t$, and for $m=5$ and $m=6$, we have $a=\left\lceil \frac{m}{2} \right \rceil +1=4$. Therefore, 
$r_{\Delta_T}(k, t) \leq 3t \quad \text{if} \ t \geq T > \frac{2t}{3}.$
\end{proof}

As described in Corollary \ref{col3}, a Hamming weight distribution function $\Delta_T$, where $\frac{4t}{m-1} \geq T > \frac{4t}{m}$ and $m \in \{2, 3, \ldots, 4t\}$, is a locally $(2t, m+1)$-bounded function. 
Also, the function ball for $u\in\mathbb{F}_2^k$ and radius $\rho$ can be expressed as
\[
B_f(u,\rho)=\Big\{\,\Big\lfloor \frac{w}{T}\Big\rfloor : w\in W_{u,\rho}\,\Big\},
\]
where $
W_{u,\rho} = \big\{ \max\{0, \mathrm{wt}(u) - \rho\}, \ldots,
           \min\{\mathrm{wt}(u) + \rho, k\} \big\}.
$ Since $W_{u,\rho}$ consists of consecutive integers, its image under the floor operation is also a consecutive set of integers. Hence,
\[
B_f(u,\rho)=\{\,j\in\mathbb{N}: j_{\min}\le j\le j_{\max}\,\},
\]
where $j_{\min}=\big\lfloor \tfrac{\max\{0, \, \mathrm{wt}(u)-\rho\}}{T}\big\rfloor$ and 
$j_{\max}=\big\lfloor \tfrac{\min\{\mathrm{wt}(u)+\rho, \,k\}}{T}\big\rfloor$.  
Therefore, $B_f(u,\rho)$ forms a contiguous block under the natural total order on $\mathbb{N}$, and the Hamming weight distribution function satisfies the contiguity condition. So the bound in \eqref{bound_lambda} applies, giving $$r_{\Delta_T}(k, t) \le N(m + 1, 2t).$$
 It is, however, evident that Theorem~\ref{thm_bound_H} provides a tighter bound for $\Delta_T$, and it relies on the existence of an error-correcting code with specific parameters. For the Hamming weight distribution function, another tight upper bound is given by a construction in \cite{GXZZ2025}.

\section{Conclusion}

\label{conclusion}
In this work, locally $(\rho, \lambda)$-bounded functions were defined and related function-correcting codes were considered along with redundancy bounds. 
As the results rely on Lemma \ref{col}, they apply to the subclass of locally bounded functions where each function ball forms a contiguous block in a fixed total order. In particular, this subclass includes important functions such as the Hamming weight distribution function. In this study, all codes were binary, as the domain of the functions was considered to be $\mathbb{F}_2^k$. The work on general locally $(\rho, \lambda)$-bounded functions over $\mathbb{F}_q^k$ is currently in progress.

\section*{Acknowledgment}
This work was supported by a joint project grant to Aalto University and Chalmers University of Technology (PIs A. Graell i Amat and C. Hollanti) from the Wallenberg AI, Autonomous Systems and Software Program, and additionally by the Science and Engineering Research Board (SERB) of the Department of Science and Technology (DST), Government of India, through the J.C. Bose National Fellowship to Prof. B. Sundar Rajan.

\bibliographystyle{ieeetr}
\bibliography{FCC}

\end{document}